\RecustomVerbatimCommand{\VerbatimInput}{VerbatimInput}%
{fontsize=\scriptsize}
\newtheorem{theorem}{Theorem}
\newtheorem{lemma}[theorem]{Lemma}
\newtheorem{corollary}[theorem]{Corollary}
\theoremstyle{definition}
\newtheorem{brrule}{Rule}
\begin{document}

\newcommand{\cvd}{\textsc{ClusterVD}}
\newcommand{\cvdname}{\textsc{Cluster Vertex Deletion}}
\newcommand{\finalconst}{1.9102}
\newcommand{\Oh}{\mathcal{O}}

\newcommand{\defproblem}[4]{
  \vspace{2mm}
\noindent\fbox{
  \begin{minipage}{0.96\textwidth}
  \begin{tabular*}{\textwidth}{@{\extracolsep{\fill}}lr} #1 & {\bf{Parameter:}} #3 \\ \end{tabular*}
  {\bf{Input:}} #2  \\
  {\bf{Question:}} #4
  \end{minipage}
  }
  \vspace{2mm}
}
\newcommand{\defnoparamproblem}[3]{
  \vspace{2mm}
\noindent\fbox{
  \begin{minipage}{0.96\textwidth}
  #1
  {\bf{Input:}} #2  \\
  {\bf{Question:}} #3
  \end{minipage}
  }
  \vspace{2mm}
}

\date{}

\title{Fast branching algorithm for \textsc{Cluster Vertex Deletion}\thanks{Partially supported by NCN grant N206567140 and Foundation for Polish Science.}}

\author{
  Anudhyan Boral\thanks{Chennai Mathematical Institute, Chennai, India, \texttt{anudhyan@cmi.ac.in}}
  \and 
  Marek Cygan\thanks{Institute of Informatics, University of Warsaw, Poland, \texttt{cygan@mimuw.edu.pl}}
  \and
  Tomasz Kociumaka\thanks{Institute of Informatics, University of Warsaw, Poland, \texttt{kociumaka@mimuw.edu.pl}}
  \and
  Marcin Pilipczuk\thanks{Institute of Informatics, University of Warsaw, Poland, \texttt{malcin@mimuw.edu.pl}}
}

\maketitle

\begin{abstract}
In the family of clustering problems, we are given a set
of objects (vertices of the graph), together with some
observed pairwise similarities (edges). The goal is to identify clusters
of similar objects by slightly modifying the graph to obtain
a cluster graph (disjoint union of cliques).

H\"{u}ffner et al.~[Theory Comput. Syst. 2010] initiated the parameterized study
of \cvdname{}, where the allowed modification is vertex deletion,
and presented an elegant $\Oh(2^k k^9 + nm)$-time fixed-parameter algorithm, parameterized
by the solution size. In our work, we pick up this line of research
and present an $\Oh(\finalconst^k (n+m))$-time branching algorithm.
\end{abstract}

\section{Introduction}

The problem to cluster objects based on their pairwise similarities
has arisen from applications both in computational biology~\cite{Ben-DorSY99}
and machine learning~\cite{Bansal04}.
In the language of graph theory, as an input we are given a graph where
vertices correspond to objects, and two objects are connected by an edge
if they are observed to be similar. The goal is to transform the graph
into a cluster graph (a disjoint union of cliques) using a minimum number
of modifications.

The set of allowed modifications depends on a particular problem
and an application considered. Probably the most studied variant is the
\textsc{Cluster Editing} problem, known also as \textsc{Correlation Clustering},
where we seek for a minimal number of edge editions to obtain a cluster graph.
The study of \textsc{Cluster Editing} include%
~\cite{AilonCN08,AlonMMN05,CharikarW04,GiotisG06,ShamirST04}
and, from the parameterized perspective,%
~\cite{bocker:iwoca,BockerBBT08,BockerBK11,BockerD11,BodlaenderFHMPR10,Damaschke10,FellowsGKNU11,ours-stacs,GrammGHN05,GuoKKU11,GuoKNU10,komusiewicz:thesis,komusiewicz:sofsem,ProttiSS09}.

The main principle of the parameterized complexity
is that we seek for algorithms that are efficient if the considered parameter is small.
However, the distance measure in \textsc{Cluster Editing}, the number of edge editions,
may be quite large in practical instances, and, in the light of recent lower bounds refuting
the existence of subexponential FPT algorithms for \textsc{Cluster Editing}~\cite{ours-stacs,komusiewicz:thesis},
it seems reasonable to look for other distance measures (see e.g. Komusiewicz's PhD thesis~\cite{komusiewicz:thesis}) and/or different problem formulations.

In 2008, H\"{u}ffner et al.~\cite{cvd2k} initiated the parameterized study of the
\cvdname{} problem (\cvd{} for short). Here, the allowed modification is a vertex
deletion.

\defproblem{\cvdname{} (\cvd{})}{An undirected graph $G$ and an integer $k$.}{$k$}{Does there exist a set $S$ of at most $k$ vertices of $G$ such that $G \setminus S$ is a cluster graph, i.e., a disjoint union of cliques?}

In terms of motivation, we want to refute as few objects as possible to make the
set of observations completely consistent. As a vertex deletion removes as well
all its incident edges, we may expect that this new editing measure may be significantly
smaller in practical applications than the edge-edition distance.

As \cvd{} can be equivalently stated as the problem of hitting, with minimum number of vertices, all induced $P_3$s (paths on $3$ vertices) in the input graph, \cvd{} can
be solved in $\Oh(3^k (n+m))$ time by a straightforward branching algorithm~\cite{cai},
where $n$ and $m$ denote the number of vertices and edges of $G$, respectively.
The dependency on $k$ can be improved by considering more elaborate case distinction
in the branching algorithm, either directly~\cite{gramm:cvd}, or via a general
algorithm for \textsc{3-Hitting Set}~\cite{magnus:thesis}.
H\"{u}ffner et al.~\cite{cvd2k} provided an elegant $\Oh(2^kk^9 +nm)$-time algorithm,
using the iterative compression principle~\cite{reed:ic} and a reduction
to the weighted maximum matching problem.

In our work we pick up this line of research and obtain
the fastest algorithm for (unweighted) \cvd{}.

\begin{theorem}\label{thm:main}
\cvdname{} can be solved in $\Oh(\finalconst^k (n+m))$ time and polynomial space
on an input $(G,k)$ with $|V(G)| = n$ and $|E(G)| = m$.
\end{theorem}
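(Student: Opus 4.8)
The plan is to design a bounded search tree (branching) algorithm whose measure is the parameter $k$. First I would use the standard reformulation: $G$ is a cluster graph if and only if it contains no induced $P_3$, so a set $S$ is a solution exactly when it hits every induced $P_3$. The engine of the whole argument is a single structural observation: if $v \notin S$ for a solution $S$, then the connected component of $v$ in $G \setminus S$ is a clique, which forces (i) $N(v) \setminus S$ to be a clique, i.e.\ $S \cap N(v)$ must be a vertex cover of the complement graph $\overline{G[N(v)]}$, and (ii) every vertex $w$ that stays in $v$'s clique to have all of its remaining neighbours inside that clique, so that $N(w) \setminus N[v]$ is entirely deleted. Point (ii) is what lets the analysis drop well below the trivial base $3$: committing $v$ to survive cascades into many further forced deletions.

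The branching scheme would then, at each step, pick a suitable vertex $v$ (lying on some induced $P_3$ and, after reductions, having a sufficiently rich neighbourhood) and split into the branch $v \in S$, costing one unit of the budget, and the branch $v \notin S$. In the second branch I would resolve the constraint from the structural observation by a vertex-cover--style sub-branching over the non-edges of $N(v)$, while simultaneously charging the forced deletions coming from (ii). Because the two constraints reinforce each other, each leaf of this local sub-branching removes several vertices, giving branching vectors strictly better than $(1,1)$; note that $(1,1)$ already yields base $2$, so the entire case analysis must be arranged so that no branch pair consists of two single-deletion steps.

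To certify the constant $\finalconst$ I would set up a finite case distinction on the local structure of $N(v)$ --- the number of non-edges inside $N(v)$, whether certain small patterns occur, and how the surviving clique interacts with the rest of the graph --- and, for each case, write down the branching vector and check that its branching number does not exceed $\finalconst$. Reduction rules (deleting connected components that are already cliques, contracting twins and modules, and disposing of vertices whose neighbourhood is already a closed clique) would be applied first, so that the branching vertex always falls into one of the analysed ``heavy'' cases rather than a degenerate one.

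For the running time, the search tree has $\Oh(\finalconst^k)$ nodes, since every internal node has at least two children; at each node, detecting an induced $P_3$, selecting $v$, computing $\overline{G[N(v)]}$, and performing the split all take $\Oh(n+m)$ time, and the algorithm uses only the DFS stack, hence polynomial (indeed linear) space. Multiplying per-node cost by the number of nodes gives the claimed $\Oh(\finalconst^k (n+m))$ bound. The hard part will be the exhaustive case analysis of the third paragraph: one must show that \emph{every} local configuration surviving the reduction rules admits a branching of number at most $\finalconst$, and simultaneously that the chosen $v$ can always be found, so that no case degenerates to a $(1,1)$-type split that would push the base up to $2$.
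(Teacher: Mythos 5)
Your starting point coincides with the paper's: the observation that a solution $S$ with $v\notin S$ must contain a vertex cover of the graph whose edges are the non-edges inside $N(v)$ together with the edges from $N(v)$ to the second neighbourhood (the paper's auxiliary graph $H_v$), and that surviving neighbours of $v$ force further deletions. However, there is a genuine gap in how you intend to get the base below $2$ --- indeed, below $2.4$. Your scheme always branches into ``$v\in S$'' (a single deletion) versus ``$v\notin S$'' followed by a vertex-cover-style sub-branching. The sub-branching on $H_v$ unavoidably has subcases deleting only one vertex (e.g.\ when $H_v$ is a single edge, or in the $(1,2)$ and $(1,d)$ vertex-cover branches), so the combined vector from the root is of the form $(1,1,\ldots)$; already $(1,1,2)$ has branching number $1+\sqrt{2}\approx 2.414$. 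Saying that ``no branch pair consists of two single-deletion steps'' names the danger but supplies no mechanism to avert it: the $v\in S$ branch is intrinsically a single-deletion step, and you cannot make the $v\notin S$ side uniformly expensive.

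The paper's way around this is precisely the content you are missing. First, a domination argument (vertex covers of $H_v$ that use more of the second neighbourhood are always at least as good) lets the $H_v$-branching be organised so that its only bad vector is $(1,2)$, occurring only when $H_v$ has a very rigid ``skein'' structure. Second, and crucially, one proves that for any vertex cover $X$ of $H_v$ there is a solution that either avoids $v$ or misses at least two vertices of $X$; consequently, when $\minvc(H_v)=1$ or $H_v$ is a skein, the ``$v\in S$'' branch can be \emph{omitted entirely}, and when $\minvc(H_v)=2$ the ``$v\in S$'' branch is guaranteed additional deletions by locating a non-clique component containing one of the two cover vertices. Only when $\minvc(H_v)\ge 3$ is the bare cost-$1$ branch for deleting $v$ used, and then the $v\notin S$ side is guaranteed depth at least $3$ before any cheap subcase terminates, which is what balances the vector (the worst case being $(1,3,3,4,4,5)$). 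Without lemmas of this kind your ``finite case distinction on the local structure of $N(v)$'' cannot close: some configuration with $\minvc(H_v)=1$ survives every reasonable reduction rule, and for it your scheme produces a branching number above $2$.
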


Contrary to the algorithm of~\cite{cvd2k}, our algorithm is a typical branching
algorithm, where a number of branches and reductions is presented, and the complexity
is analysed through (sometimes long) case analysis and branching vectors.
The advantage of this approach
is that we obtain a linear dependency on the graph size in the running time.

The main observation in the proof of Theorem~\ref{thm:main} is that,
if, for some vertex $v$, we know that there exists a solution $S$ not containing $v$,
in the neighbourhood of $v$ the \cvd{} problem reduces to \textsc{Vertex Cover}.
More precisely, define $N_1$ and $N_2$ to be the vertices within distance $1$ and $2$
from $v$, respectively, and define
the auxiliary graph $H_v$ to be a graph on $N_1 \cup N_2$
having and edge for each edge of $G$ between $N_1$ and $N_2$ and for each
non-edge inside $N_1$
in $G$.
In other words, two vertices are connected by an edge in $H_v$ iff, together with $v$, they form a $P_3$ in $G$.
We observe that a solution $S$ not containing $v$
needs to contain a vertex cover of $H_v$.
Moreover, one can show that we may greedily take
as much as possible (inclusion-wise) vertices from $N_2$ into the aforementioned
vertex cover, as these vertices would help us resolve the remaining part of the graph.

We note that a similar observation has been already used in~\cite{cvd2k} to cope
with a variant of \cvd{} where we restrict the number of clusters in the resulting
graph.

Branching to find the `correct' vertex cover of $H_v$ is a very efficient branching,
with worst-case $(1,2)$ (i.e., golden-ratio) branching vector. However,
we do not have the vertex $v$ beforehand, and branching to obtain such a vertex
may be quite costly. Thus, our approach is to get as much gain as possible
from the vertex cover-style branching on the auxiliary graph $H_v$, to be able
to balance the loss from some inefficient branches used to obtain the vertex $v$
to start with. Consequently, we employ quite involved
analysis of properties and branching algorithms for the auxiliary graph $H_v$.

The paper is organised as follows. We give some preliminary definitions
and notation in Section~\ref{sec:prelims}.
In Section~\ref{sec:Hv} we analyse the auxiliary graph $H_v$ and show
a branching algorithm finding all relevant vertex covers of $H_v$.
Then, in Section~\ref{sec:alg} we prove Theorem~\ref{thm:main}.
Section~\ref{sec:conc} concludes the paper.

\newcommand{\minvc}{\texttt{MinVC}}

\section{Preliminaries}\label{sec:prelims}

We use standard graph notation. All our graphs are undirected and simple.
For a graph $G$, by $V(G)$ and $E(G)$ we denote its vertex- and edge-set, respectively.
For $v \in V(G)$, the set $N_G(v) = \{u| uv \in E(G)\}$ is the neighbourhood
of $v$ in $G$ and $N_G[v] = N_G(v) \cup \{v\}$ is the closed neighbourhood.
We extend these notions to sets of vertices $X \subseteq V(G)$
by $N_G[X] = \bigcup_{v \in X} N_G[v]$ and $N_G(X) = N_G[X] \setminus X$.
We omit the subscript if it is clear from the context.
For a set $X \subseteq V(G)$ we also define $G[X]$ to be the subgraph induced by $X$
and $G \setminus X$ is a shorthand for $G[V(G) \setminus X]$.
A set $X \subseteq V(G)$ is called a {\em{vertex cover}} of $G$ if
$G \setminus X$ is edgeless.
By $\minvc(G)$ we denote the size of the minimum vertex cover of $G$.

In all further sections, we assume we are given an instance $(G,k)$ of
\cvdname{}, where $G = (V,E)$. That is, we use $V$ and $E$ to denote the vertex-
and edge-set of the input instance $G$.

A $P_3$ is an ordered set of $3$ vertices
$(u,v,w)$ such that $uv, vw \in E$ and $uw \notin E$.
A graph is a cluster graph iff it does not contain any $P_3$; hence, in \cvd{}
we seek for a set of at most $k$ vertices that hits all $P_3$s.

If at some point a vertex $v$ is fixed in the graph $G$, we define sets $N_1 = N_1(v)$
and $N_2 = N_2(v)$ as follows: $N_1 = N_G(v)$ and $N_2 = N_G(N_G[v])$. That is,
$N_1$ and $N_2$ are sets of vertices within distance $1$ and $2$ from $v$, respectively.
For a fixed $v \in V$,
we define an auxiliary graph $H_v$ with
$V(H_v) = N_1 \cup N_2$ and
$$E(H_v) = \{uw | u,w \in N_1, uw \notin E\} \cup \{uw | u \in N_1, w \in N_2, uw \in E \}.$$
Thus, $H_v$ consists of the vertices in $N_1$ and $N_2$
along with non-edges among vertices of $N_1$ and edges between $N_1$ and $N_2$.
Observe the following.
\begin{lemma}\label{lem:Hv-P3}
For $u,w \in N_1 \cup N_2$, we have $uw \in E(H_v)$ iff
$u$, $w$ and $v$ form a $P_3$ in $G$.
\end{lemma}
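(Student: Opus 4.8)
The plan is to prove both implications by a short case analysis driven by the position of the center of the $P_3$. First I would record two structural facts that follow directly from the definitions. Since $N_G(N_G[v]) = N_G[N_G[v]] \setminus N_G[v]$ excludes $N_G[v]$, the sets $N_1$ and $N_2$ are disjoint and neither contains $v$; in particular every $u \in N_1$ satisfies $uv \in E$, while every $u \in N_2$ satisfies $uv \notin E$. Second, a $P_3$ on the three (necessarily distinct) vertices $\{u,w,v\}$ is precisely a configuration in which exactly two of the three pairs $uv$, $wv$, $uw$ are edges of $G$; equivalently, there is a unique \emph{center} vertex adjacent to the other two, the remaining pair being a non-edge.

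For the forward implication, suppose $uw \in E(H_v)$. By the definition of $E(H_v)$ there are two cases. If $u,w \in N_1$ and $uw \notin E$, then $uv, wv \in E$ while $uw \notin E$, so $(u,v,w)$ is a $P_3$ centered at $v$. If $u \in N_1$, $w \in N_2$ and $uw \in E$, then $uv \in E$ and $uw \in E$ while $wv \notin E$ (as $w \in N_2$), so $(v,u,w)$ is a $P_3$ centered at $u$. Either way $\{u,w,v\}$ form a $P_3$, and since $E(H_v)$ is undirected the symmetric subcase $u \in N_2$, $w \in N_1$ is handled identically.

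For the converse, assume $\{u,w,v\}$ form a $P_3$ and split on its center. If $v$ is the center then $uv, wv \in E$ and $uw \notin E$, so $u,w \in N_1$ with $uw \notin E$, yielding a non-edge of the first type in $E(H_v)$. If instead $u$ is the center, then $uv \in E$ and $uw \in E$ while $wv \notin E$; thus $u \in N_1$, and $w$ is adjacent to $u \in N_G[v]$ yet satisfies $w \neq v$ and $wv \notin E$, so $w$ lies in $N_G[N_G[v]] \setminus N_G[v] = N_2$. Hence $uw$ is an edge between $N_1$ and $N_2$ present in $G$, i.e. an edge of the second type in $E(H_v)$. The case where $w$ is the center is symmetric.

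The only point requiring care is the membership argument that places the non-$v$-adjacent endpoint of a $u$-centered (or $w$-centered) $P_3$ into $N_2$ rather than leaving it unclassified: here one must invoke both that this endpoint is adjacent to a member of $N_G[v]$ and that it is neither $v$ nor adjacent to $v$. As a sanity check, the remaining possibility $u,w \in N_2$ is consistent on both sides, since then $uv, wv \notin E$, so $\{u,w,v\}$ cannot form a $P_3$, and correspondingly $E(H_v)$ contains no edge with both endpoints in $N_2$.
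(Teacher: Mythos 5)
Your proof is correct and follows essentially the same route as the paper's: a case split on the two types of edges of $H_v$ for the forward direction, and a case split on the center of the $P_3$ for the converse. You supply somewhat more detail (in particular, the explicit verification that the non-$v$-adjacent endpoint of a $u$-centered $P_3$ lies in $N_2$), but the argument is the same.
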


\begin{proof}
For every $uw \in E(H_v)$ with $u, w \in N_1$, $(u,v,w)$ is a $P_3$ in $G$.
For $uw \in E(H_v)$ with $u \in N_1$ and $w \in N_2$, $(v,u,w)$ forms a $P_3$ in $G$.
In the other direction, for any $P_3$ in $G$ of the form $(u,v,w)$ we have
$u,w \in N_1$ and $uw \notin E$, thus $uw \in E(H_v)$.
Finally, for any $P_3$ in $G$ of the form $(v,u,w)$ we have $u \in N_1$, $w \in N_2$
and $uw \in E$, hence $uw \in E(H_v)$.
\end{proof}

We call a subset $S \subseteq V$ a \textit{modulator} when $G \setminus S$ is a cluster graph, that is, a collection of cliques.
A modulator with minimal cardinality is called a \textit{solution}.

Our algorithm is a typical branching algorithm, that is, it consists of a number
of {\em{branching steps}}.
In a step $(A_1,A_2,\ldots,A_r)$,
$A_1,A_2,\ldots,A_r \subseteq V$, we independently consider $r$ subcases.
In the $i$-th subcase we look for a solution $S$ containing $A_i$: we delete
$A_i$ from the graph and decrease the parameter $k$ by $|A_i|$.
If $k$ becomes negative, we terminate the current branch and return a negative answer
from the current subcase.
For brevity, we sometimes write in the branching step
$w$ instead of $\{w\}$ if $A_i = \{w\}$ for some $i$.

The {\em{branching vector}} for a step $(A_1,A_2,\ldots,A_r)$ is
the vector $(|A_1|,|A_2|,\ldots,|A_r|)$. It is well-known (see e.g.~\cite{fomin-kratsch})
that the number of final subcases of a branching algorithm is bounded
by $\Oh(c^k)$, where $c$ is the largest positive root
of an equation $1 = \sum_{i=1}^r x^{-|A_i|}$ among all branching steps
$(A_1,A_2,\ldots,A_r)$ in the algorithm.

\section{The auxiliary graph $H_v$}\label{sec:Hv}

In this section we investigate properties of the auxiliary graph $H_v$.
Hence, we assume that a \cvd{} input $(G,k)$ is given with $G=(V,E)$,
and a vertex $v \in V$ is fixed.
We first start with a few basic properties and then we build
on them an efficient branching algorithm for \cvd{}, if we know
there exists a solution not containing $v$.

\subsection{Basic properties}

\begin{lemma} \label{lemma_p3} Let $G$ be a connected graph which is not a clique. Then, for every $v \in V(G)$, there is a $P_3$ containing $v$.
\end{lemma}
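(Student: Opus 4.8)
The plan is to argue by a case distinction on the structure of the neighbourhood $N(v)$, producing a $P_3$ through $v$ in each case. The starting point is the observation that $v$ fails to be the centre of a $P_3$ exactly when $N(v)$ induces a clique: if some pair $a,b \in N(v)$ satisfies $ab \notin E$, then $(a,v,b)$ is by definition a $P_3$ containing $v$, and we are immediately done. Hence the only interesting case is when $N(v)$ is a clique, and here I would instead look for a $P_3$ with $v$ as an endpoint.

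So suppose $N(v)$ is a clique. I first claim that $N[v] \neq V(G)$. Indeed, if $N[v] = V(G)$, then every vertex other than $v$ lies in $N(v)$ and is therefore adjacent to $v$, while all vertices of $N(v)$ are pairwise adjacent because $N(v)$ is a clique; this would make $G$ a clique, contradicting the hypothesis. Thus there exists a vertex $w \in V(G) \setminus N[v]$, i.e.\ a vertex at distance at least $2$ from $v$.

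The final step is to convert the existence of such a $w$ into a concrete $P_3$ using connectivity. Since $G$ is connected, fix a shortest path $v = p_0, p_1, \dots, p_\ell = w$; because $w \notin N[v]$ we have $\ell \geq 2$. Then $p_1 \in N(v)$, so $vp_1 \in E$ and $p_1 p_2 \in E$, while $\operatorname{dist}(v, p_2) = 2$ forces $vp_2 \notin E$. Consequently $(v, p_1, p_2)$ is a $P_3$ containing $v$, which completes the argument.

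I expect the main obstacle to be the second case, specifically the need to combine the two hypotheses---connectivity and the failure to be a clique---to guarantee a vertex at distance \emph{exactly} $2$ from $v$: the non-clique assumption supplies a vertex outside $N[v]$ only after ruling out $N[v] = V(G)$, and connectivity is then what places a witness at distance exactly $2$ rather than merely somewhere outside $N[v]$. An equivalent and perhaps slicker route would be a single proof by contradiction: assuming that no $P_3$ meets $v$ forces both that $N(v)$ is a clique and that $N[u] \subseteq N[v]$ for every $u \in N(v)$, whence $N[v]$ is closed under taking neighbours, and connectivity yields $N[v] = V(G)$, again contradicting that $G$ is not a clique.
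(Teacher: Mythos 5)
Your proof is correct and follows essentially the same route as the paper's: first look for a non-adjacent pair in $N(v)$ giving a $P_3$ centred at $v$, and otherwise use connectivity and the non-clique hypothesis to find a vertex at distance exactly $2$, giving a $P_3$ with $v$ as an endpoint. You merely spell out the existence of that distance-$2$ vertex (via a shortest path) more explicitly than the paper does.
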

\begin{proof} Consider $N(v)$. If there exist vertices $u, w \in N(v)$ such that $uw \notin E(G)$ then we have a $P_3$ $(u,v,w)$. Otherwise, since $N[v]$ induces a clique, we must have $w \in N(N[v])$ such that $uw \in E(G)$ for some $u \in N(v)$. Thus we have a $P_3$, $(v,u,w)$ involving $v$.
\end{proof}

\begin{lemma} \label{lemma_vc} Let $S$ be a modulator such that $v \notin S$.
Then $S$ contains a vertex cover of $H_v$.
\end{lemma}

\begin{proof}
Observe that if $S$ is a modulator, then $G \setminus S$ does not contain a $P_3$. 
By Lemma \ref{lem:Hv-P3}, if $v \notin S$, no edge may remain in $H_v \setminus S$
and the lemma follows.
\end{proof}

\begin{lemma} \label{lemma_vc_converse} Let $X$ be a vertex cover of $H_v$. Then, in $G \setminus X$, the connected component of $v$ is a clique.
\end{lemma}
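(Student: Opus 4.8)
The plan is to identify the connected component of $v$ in $G \setminus X$ explicitly as the set $\{v\} \cup (N_1 \setminus X)$, and then show that this set is both a clique and closed (no edges leave it in $G \setminus X$), which together give the claim. Note first that $X \subseteq V(H_v) = N_1 \cup N_2$, so $v \notin X$ and hence $v$ survives in $G \setminus X$.

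First I would show that $\{v\} \cup (N_1 \setminus X)$ induces a clique. The vertex $v$ is adjacent to every vertex of $N_1$ by definition, so it suffices to check that $N_1 \setminus X$ is a clique. If two vertices $u, w \in N_1 \setminus X$ were non-adjacent in $G$, then $uw$ would be a non-edge inside $N_1$, hence $uw \in E(H_v)$ by the definition of $H_v$ (equivalently, by Lemma~\ref{lem:Hv-P3}, since $(u,v,w)$ would be a $P_3$). As $X$ is a vertex cover of $H_v$ and $u, w \notin X$, this edge of $H_v$ would be uncovered, a contradiction. Hence any two surviving neighbours of $v$ are adjacent, and the whole set is a clique.

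Next I would argue that no edge of $G \setminus X$ leaves $\{v\} \cup (N_1 \setminus X)$. Every neighbour in $G$ of a vertex of $N_1$ lies in $\{v\} \cup N_1 \cup N_2$, since $N_1 \subseteq N_G[v]$ and $N_2 = N_G(N_G[v])$. Consider a surviving vertex $u \in N_1 \setminus X$ and any neighbour $w$ of $u$ in $G \setminus X$. If $w \in \{v\} \cup N_1$, then $w$ already lies in our set (it cannot be in $N_1 \cap X$, as it survives). If instead $w \in N_2$, then $uw$ is an edge between $N_1$ and $N_2$, so $uw \in E(H_v)$; but then the vertex cover $X$ would contain $u$ or $w$, contradicting $u, w \notin X$. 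The edges incident to $v$ behave the same way: they all go to $N_1$, and the surviving ones land in $N_1 \setminus X$. Thus the set has no outgoing edges in $G \setminus X$.

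Finally I would combine the two observations: $\{v\} \cup (N_1 \setminus X)$ is connected (being a clique containing $v$) and has no edges to the rest of $G \setminus X$, so it is exactly the connected component of $v$; being a clique, the lemma follows. I expect the only subtle point to be the bookkeeping that a neighbour of an $N_1$-vertex can only lie in $\{v\} \cup N_1 \cup N_2$, which is precisely what rules out escaping to distance $\ge 3$ from $v$; everything else is a direct appeal to the definition of $H_v$ and to the vertex-cover property of $X$.
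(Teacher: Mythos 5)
Your proof is correct, but it takes a different route from the paper's. The paper argues by contradiction in three lines: if the component of $v$ in $G \setminus X$ were not a clique, then Lemma~\ref{lemma_p3} would produce a $P_3$ through $v$ surviving in $G \setminus X$; by Lemma~\ref{lem:Hv-P3} the two non-$v$ vertices of that $P_3$ form an edge of $H_v$, which the vertex cover $X$ must hit --- contradiction. You instead identify the component of $v$ explicitly as $\{v\} \cup (N_1 \setminus X)$ and verify directly that it is a clique with no outgoing edges in $G \setminus X$; all your steps check out, including the one genuinely delicate point that a neighbour of an $N_1$-vertex must lie in $\{v\} \cup N_1 \cup N_2$, which is exactly what the definition $N_2 = N_G(N_G[v])$ gives. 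Your version is longer but more informative (it pins down the component exactly and avoids invoking Lemma~\ref{lemma_p3}); the paper's version is shorter because it delegates the structural work to the general statement that every vertex of a connected non-clique lies on some $P_3$. Either proof is acceptable.
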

\begin{proof} Suppose the connected component of $v$ in $G \setminus X$ is not a clique. Then by Lemma \ref{lemma_p3}, there is a $P_3$ involving $v$.
Such a $P_3$ is also present in $G$. However, by Lemma \ref{lem:Hv-P3},
as $X$ is a vertex cover of $H_v$, $X$ intersects such a $P_3$, a contradiction.
\end{proof}

\begin{lemma} \label{prefer_n2} Let $S$ be a modulator such that $v \notin S$. Denote by $X$ the set $S \cap V(H_v)$. Let $Y$ be a vertex cover of $H_v$. Suppose that $X \cap N_2 \subseteq Y \cap N_2$. Then $T \triangleq (S \setminus X) \cup Y$ is also a modulator.
\end{lemma}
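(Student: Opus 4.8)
The goal is to show that $G \setminus T$ contains no induced $P_3$. The plan is to first pin down exactly how $T$ differs from the given modulator $S$, and then to argue that every potential $P_3$ in $G\setminus T$ is destroyed either by the clique structure around $v$ or by the fact that $S$ is already a modulator.

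First I would record the elementary set identities. Since $X = S \cap (N_1 \cup N_2)$ and $Y \subseteq V(H_v) = N_1 \cup N_2$, the set $S \setminus X$ contains no vertex of $N_1 \cup N_2$; hence $T$ agrees with $S$ outside $N_1\cup N_2$ and satisfies $T \cap (N_1\cup N_2) = Y$. A direct computation then gives $S \setminus T = X \setminus Y$, and here the hypothesis $X \cap N_2 \subseteq Y \cap N_2$ enters: it forces $(X\setminus Y)\cap N_2 = \emptyset$, so that $A := S\setminus T \subseteq N_1$. In words, the only vertices present in $G\setminus T$ but absent from $G\setminus S$ all lie in $N_1$.

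Next I would exploit the clique around $v$. Because $Y \subseteq T$ and $Y$ is a vertex cover of $H_v$, Lemma~\ref{lemma_vc_converse} shows that the connected component of $v$ in $G\setminus Y$ is a clique; since $G\setminus T$ is an induced subgraph of $G\setminus Y$, the component $C_v$ of $v$ in $G\setminus T$ is contained in it and therefore still induces a clique. I would then check $A\subseteq C_v$: every $a\in A\subseteq N_1$ is adjacent to $v$, and as neither $a$ nor $v$ lies in $T$, the edge $av$ survives in $G\setminus T$, placing $a$ in the same component $C_v$.

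Finally I would derive the contradiction. Suppose $(a,b,c)$ is a $P_3$ in $G\setminus T$; being connected, its three vertices lie in a single connected component of $G\setminus T$. If that component is $C_v$, there can be no $P_3$, since $C_v$ is a clique. Otherwise the component avoids $A\subseteq C_v$, so none of $a,b,c$ belongs to $A = S\setminus T$; as all three already avoid $T$, they also avoid $S$, and $(a,b,c)$ is then a $P_3$ in $G\setminus S$, contradicting that $S$ is a modulator. Either way no $P_3$ can survive, so $T$ is a modulator. The crux of the argument is the second step: the hypothesis on $N_2$ is exactly what confines the ``new'' vertices $S\setminus T$ to $N_1$, guaranteeing that they are absorbed into the clique $C_v$ and hence cannot create a fresh $P_3$.
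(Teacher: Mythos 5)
Your proof is correct and takes essentially the same route as the paper's: both rest on Lemma~\ref{lemma_vc_converse} to make the component of $v$ in $G \setminus T$ a clique, and both use the hypothesis $X \cap N_2 \subseteq Y \cap N_2$ to show that any surviving $P_3$ lies in a part of the graph where $T$ contains everything $S$ did. You merely package this by confining $S \setminus T = X \setminus Y$ to $N_1$ and absorbing it into the clique component, where the paper instead localizes the putative $P_3$ to $V \setminus (\{v\} \cup N_1)$; the two bookkeepings are equivalent and yours is, if anything, slightly more explicit.
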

\begin{proof}
Since $Y$ (and hence, $T \cap V(H_v)$) is a vertex cover of $H_v$ and $v \notin T$, we know by Lemma \ref{lemma_vc_converse} that the connected component of $v$ in $G \setminus T$ is a clique. If $T$ is not a modulator, then there must be a $P_3$ contained in $Z \setminus T$, where $Z = V \setminus (\{v\} \cup N_1)$. But since $S \cap Z \subseteq T \cap Z$, $G \setminus S$ would also contain such a $P_3$.
\end{proof}

For vertex covers of $H_v$, $X$ and $Y$ , we say $X$ \textit{dominates} $Y$ if $|X| \leq |Y|$,
$X \cap N_2 \supseteq Y \cap N_2$ and at least one of these inequalities is sharp.
Two vertex covers $X$ and $Y$ are said to be \textit{equivalent} if $X \cap N_2 = Y \cap N_2$ and $|X \cap N_1| = |Y \cap N_1|$.
We note that the first aforementioned relation is transitive and strongly anti-symmetric, whereas the second
is an equivalence relation.

As a corollary of Lemma \ref{prefer_n2}, we have: 
\begin{corollary}\label{cor:dom}
Let $S$ be a modulator such that $v \notin S$.
Suppose $Y$ is a vertex cover of $H_v$ which either dominates or is equivalent to the vertex cover $X = S \cap V(H_v)$.
Then $T \triangleq (S \setminus X) \cup Y$ is also a modulator with $|T| \leq |S|$.
\end{corollary}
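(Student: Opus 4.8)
The plan is to reduce everything to Lemma~\ref{prefer_n2}, whose conclusion (that $T$ is a modulator) is the substantive part; the role of this corollary is merely to verify that the hypotheses of that lemma hold in each of the two cases and then to account for the cardinalities. First I would record that, since $S$ is a modulator with $v \notin S$, Lemma~\ref{lemma_vc} guarantees that $X = S \cap V(H_v)$ is indeed a vertex cover of $H_v$, so that the domination and equivalence relations between $X$ and $Y$ are well defined and the statement makes sense.

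The first step is to extract the single inclusion that Lemma~\ref{prefer_n2} requires, namely $X \cap N_2 \subseteq Y \cap N_2$. If $Y$ dominates $X$, this is immediate from the condition $Y \cap N_2 \supseteq X \cap N_2$ in the definition of domination; if $Y$ and $X$ are equivalent, it follows even more directly from $X \cap N_2 = Y \cap N_2$. With this inclusion in hand, Lemma~\ref{prefer_n2} applies verbatim and yields that $T \triangleq (S \setminus X) \cup Y$ is a modulator.

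It remains to prove $|T| \leq |S|$, and for this I would first establish $|Y| \leq |X|$ in both cases. In the domination case this is part of the definition. In the equivalence case I would use that $V(H_v) = N_1 \cup N_2$ is a \emph{disjoint} union (which is immediate from $N_2 = N_G(N_G[v])$ excluding $N_G[v] \supseteq N_1$), so that $|X| = |X \cap N_1| + |X \cap N_2|$ and likewise for $Y$; the two equalities defining equivalence then give $|X| = |Y|$ termwise. The final count is a short disjointness argument: since $X = S \cap V(H_v)$ we have $S \setminus X = S \setminus V(H_v)$, which is disjoint from $Y \subseteq V(H_v)$, whence $|T| = |S| - |X| + |Y| \leq |S|$.

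The only point needing any care — and hence the main, if modest, obstacle — is the cardinality bookkeeping in the equivalence case, where one must invoke the disjointness of $N_1$ and $N_2$ to turn the two equalities in the definition of equivalence into $|X| = |Y|$; everything else is a direct unpacking of definitions together with a single application of Lemma~\ref{prefer_n2}.
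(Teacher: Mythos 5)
Your proposal is correct and is exactly the intended derivation: the paper omits the proof, presenting the statement as an immediate consequence of Lemma~\ref{prefer_n2}, and your argument fills in precisely the steps being taken for granted (extracting $X \cap N_2 \subseteq Y \cap N_2$ from either relation, applying the lemma, and the cardinality count using the disjointness of $N_1$ and $N_2$ and of $S \setminus X$ from $Y$). No discrepancies with the paper's approach.
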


\subsection{Branching algorithm}\label{sec:Hv:br}

We are now ready to develop a branching algorithm that guesses the `correct'
vertex cover of $H_v$. Recall that we are working in the setting where
we look for a solution to \cvd{} on $(G,k)$ not containing $v$, thus,
by Lemma \ref{lemma_vc}, containing a vertex cover of $H_v$.
Our goal is to branch into a number of subcases, in each subcase
picking a vertex cover of $H_v$.
By Corollary \ref{cor:dom}, our branching algorithm,
to be correct, needs only to generate at least one element from each
equivalence class of the `equivalent' relation,
among maximal elements in the `dominate' relation.

The algorithm consists of a number of branching steps; in each subcase
of each step we take a number of vertices into the constructed vertex cover of $H_v$
and, consequently, into the constructed solution to \cvd{} on $G$.
At any point, the first applicable rule is applied.

First, we disregard isolated vertices in $H_v$.
Second, we take care of large-degree vertices.
\begin{brrule}\label{r:Hv:deg3}
If there is a vertex $u \in V(H_v)$ with degree at least $3$ in $H_v$,
include either $u$ or $N_{H_v}(u)$ into the vertex cover.
That is, use the branching step $(u,N_{H_v}(u))$.
\end{brrule}
Note that Rule~\ref{r:Hv:deg3} yields a branching vector $(1,d)$, where
$d \geq 3$ is the degree of $u$ in $H_v$.

Henceforth, we can assume that vertices have degree $1$ or $2$ in $H_v$.
Assume there exists $u \in N_1$ of degree $1$, with $uw \in E(H_v)$.
Moreover, assume there exists a solution $S$ containing $u$.
If $w \in S$, then, by Lemma \ref{prefer_n2}, $S \setminus \{u\}$ is also a modulator,
a contradiction.
Otherwise, if $w \in N_2$, then $(S \setminus \{u\}) \cup \{w\}$ dominates $S$.
Finally, if $w \in N_1$, then $(S \setminus \{u\}) \cup \{w\}$ is equivalent to $S$.
Hence, we infer the following greedy rule.
\begin{brrule}\label{r:Hv:deg1n1}
If there is a vertex $u \in N_1$ of degree $1$ in $H_v$, include $N_{H_v}(u)$
into the vertex cover. That is, use the branching step $(N_{H_v}(u))$.
\end{brrule}

Now we assume vertices in $N_1$ are of degree exactly $2$ in $H_v$.
Suppose we have vertices $u, w \in N_1$ with $uw \in E(H_v)$.
We would like to branch on $u$ as in Rule~\ref{r:Hv:deg3}, including
either $u$ or $N_{H_v}(u)$ into the vertex cover.
However, note that in the case where $u$ is deleted, Rule~\ref{r:Hv:deg1n1} 
is triggered on $w$ and consequently the other neighbour of $w$ is deleted.
Hence, we infer the following rule.
\begin{brrule}\label{r:Hv:22}
If there are vertices $u,w \in N_1$, $uw \in E(H_v)$ then include
either $N_{H_v}(w)$ or $N_{H_v}(u)$ into the vertex cover.
That is, use the branching step $(N_{H_v}(w), N_{H_v}(u))$.
\end{brrule}
Note that Rule~\ref{r:Hv:22} yields the branching vector $(2,2)$.

We are left with the case where the maximum degree of $H_v$ is $2$,
there are no edges with both endpoints in $N_1$,
and no vertices of degree one in $N_1$.
Hence $H_v$ must be a collection of even cycles and paths 
(recall that $N_2$ is an independent set in $H_v$).
On each such cycle $C$, of $2l$ vertices, the vertices of $N_1$ and $N_2$ alternate.
Note that we must use at least $l$ vertices for the vertex cover of $C$.
By Lemma \ref{prefer_n2} it is optimal to greedily select the $l$ vertices in $C \cap N_2$.
\begin{brrule}\label{r:Hv:even-cycle}
If there is an even cycle $C$ in $H_v$ with every second vertex in $N_2$,
include $C \cap N_2$ into the vertex cover.
That is, use the branching step $(C \cap N_2)$.
\end{brrule}
For an even path $P$ of length $2l$, we have two choices.
If we are allowed to use $l + 1$ vertices in the vertex cover of $P$, then, by Lemma~\ref{prefer_n2}, we may greedily take $P \cap N_2$.
If we may use only $l$ vertices, the minimum possible number, we need to choose
$P \cap N_1$, as it is the unique vertex cover of size $l$ of such path.
Hence, we have an $(l,l+1)$ branch with our last rule.
\begin{brrule}\label{r:Hv:last}
Take the longest possible even path $P$ in $H_v$ and either
include $P \cap N_1$ or $P \cap N_2$ into the vertex cover.
That is, use the branching step $(P \cap N_1,P \cap N_2)$.
\end{brrule}
In Rule~\ref{r:Hv:last}, we pick the longest possible path to avoid 
the branching vector $(1,2)$ as long as possible; this is the worst branching vector
in the algorithm of this section.

When we are forced to use the $(1,2)$ branch, we exploit a very specific structure
of $H_v$.
A {\em{seagull}} is a connected component of $H_v$ that is isomorphic to a $P_3$
with middle vertex in $N_1$ and endpoints in $N_2$.
The graph $H_v$ is called an {\em{$s$-skein}} if it is a disjoint
union of $s$ seagulls and some isolated vertices.
The following observation is straightforward from the above
analysis.
\begin{lemma}\label{lem:skein}
If the algorithm of Section~\ref{sec:Hv:br} may only 
use a branch with the branching vector $(1,2)$, 
then $H_v$ is an $s$-skein for some $s \geq 1$.
\end{lemma}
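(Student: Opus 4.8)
The plan is to trace precisely when the algorithm of Section~\ref{sec:Hv:br} can be forced into a branch with vector $(1,2)$, and then read off the structure of $H_v$ from the conditions under which that happens. First I would check which rules can ever produce the vector $(1,2)$. Rule~\ref{r:Hv:deg3} yields $(1,d)$ with $d\ge 3$; Rules~\ref{r:Hv:deg1n1} and~\ref{r:Hv:even-cycle} are single-branch steps; and Rule~\ref{r:Hv:22} yields $(2,2)$. The only remaining possibility is Rule~\ref{r:Hv:last}, whose branching vector on an even path of length $2l$ is $(l,l+1)$; this equals $(1,2)$ exactly when $l=1$, i.e. when the longest even path in $H_v$ has length $2$.

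Since the rules are applied in priority order, the fact that the algorithm reaches Rule~\ref{r:Hv:last} means that none of the earlier rules applies. I would translate each non-applicability into a structural constraint on $H_v$ (after the isolated vertices have been disregarded): maximum degree at most $2$ (Rule~\ref{r:Hv:deg3}), no edge with both endpoints in $N_1$ (Rule~\ref{r:Hv:22}), no vertex of $N_1$ of degree $1$ (Rule~\ref{r:Hv:deg1n1}), and no even cycle (Rule~\ref{r:Hv:even-cycle}). Combining the first two constraints with the fact that $N_2$ is independent in $H_v$, the graph is bipartite with parts $N_1$ and $N_2$ and has maximum degree $2$; hence it is a disjoint union of paths and even cycles, and having ruled out the cycles, a disjoint union of paths.

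Next I would analyse one such path. As every $N_1$-vertex has degree exactly $2$, no $N_1$-vertex can be an endpoint, so both endpoints lie in $N_2$; walking along the path and using the bipartition, its vertices alternate $N_2,N_1,N_2,\dots,N_1,N_2$. Thus each path has an even number of edges and at least two of them---the shortest being the $P_3$ with middle vertex in $N_1$ and endpoints in $N_2$, which is exactly a seagull. Finally, because Rule~\ref{r:Hv:last} always selects a longest even path, the assumption that its branching vector is $(1,2)$ forces the longest path to have length $2$; hence every path has length exactly $2$ and is a seagull. Therefore $H_v$ is a disjoint union of seagulls and isolated vertices, i.e. an $s$-skein, with $s\ge 1$ since at least one path must be present for Rule~\ref{r:Hv:last} to fire.

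The bulk of the argument is a routine translation of ``no earlier rule applies'' into degree and bipartiteness constraints; the one point demanding care is verifying that the longest path is genuinely a seagull rather than merely short. Concretely, I would make sure to exclude degenerate shorter paths (a lone edge $N_1N_2$ would create an $N_1$-endpoint of degree $1$, contradicting non-applicability of Rule~\ref{r:Hv:deg1n1}) and to confirm that $l=1$ is the unique way the vector $(1,2)$ can arise, so that the hypothesis of the lemma really does pin $H_v$ down to an $s$-skein.
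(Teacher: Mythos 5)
Your proof is correct and follows exactly the route the paper intends: the paper dismisses this lemma as "straightforward from the above analysis," and your argument is precisely that analysis made explicit (only Rule~\ref{r:Hv:last} with $l=1$ can yield $(1,2)$; non-applicability of the earlier rules forces $H_v$ to be a disjoint union of paths alternating $N_2,N_1,\dots,N_1,N_2$; maximality of the chosen path then forces every component to be a seagull or isolated). The attention you pay to excluding single edges and odd paths via Rule~\ref{r:Hv:deg1n1} is exactly the right detail to check.
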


We conclude this section with a note on how fast a single branching step may be executed.
Note that, as $H_v$ contains parts of the complement of $G$, it may have size superlinear
in the size of $G$. However, it is easy to see that the following oracle procedure
suffices to find and execute the lowest-numbered available branching step in the graph $H_v$.
\begin{lemma}\label{lem:Hv:linear}
Given a designated vertex $v \in V$, one can in linear time either compute a vertex $w$ of degree at least
$3$ in $H_v$, together with its neighbourhood in $H_v$, or explicitely construct the graph $H_v$.
\end{lemma}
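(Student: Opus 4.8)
The plan is to avoid ever touching the (possibly quadratically large) edge set of $H_v$ directly, and instead to compute the $H_v$-degree of every vertex in a single linear-time sweep, using the degree information to decide which of the two outcomes applies. First I would run a two-level breadth-first search from $v$: scanning $N_G(v)$ labels the set $N_1$, and scanning the adjacency lists of all $u \in N_1$ labels $N_2$ (each previously unlabelled neighbour of an $N_1$-vertex). Since every vertex of $H_v$ lies within distance $2$ of $v$, the neighbours of any $u \in N_1$ all fall into $\{v\} \cup N_1 \cup N_2$, and for any $w \in N_2$ only its $N_1$-neighbours are relevant. This phase costs $\Oh(n+m)$.

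Next I would read off the $H_v$-degree of each vertex from a single adjacency-list scan. For $w \in N_2$ we simply have $\deg_{H_v}(w) = |N_G(w) \cap N_1|$. For $u \in N_1$, writing $d_1(u) = |N_G(u) \cap N_1|$ and $d_2(u) = |N_G(u) \cap N_2|$, the $H_v$-edges incident to $u$ are its non-neighbours inside $N_1$ together with its $G$-neighbours inside $N_2$, so $\deg_{H_v}(u) = (|N_1| - 1 - d_1(u)) + d_2(u)$. Every count here is obtained by scanning each adjacency list once and classifying the neighbours by their label, for a total of $\Oh(n+m)$.

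Now the dichotomy. If some vertex $w$ has $\deg_{H_v}(w) \ge 3$, I output it with $N_{H_v}(w)$: if $w \in N_2$ this is just $N_G(w) \cap N_1$, read from $w$'s list; if $w \in N_1$ it is $(N_G(w)\cap N_2)$ together with the non-neighbours of $w$ in $N_1$, obtained by marking $N_G(w)\cap N_1$ and then scanning $N_1$ once and emitting the unmarked vertices. Since this is done for a single vertex, it costs $\Oh(n+m)$. Otherwise every $H_v$-degree is at most $2$, so $|E(H_v)| \le |V(H_v)| = |N_1|+|N_2| \le n$ and $H_v$ is genuinely of linear size. Its $N_1$--$N_2$ edges are exactly the corresponding $G$-edges and are emitted while scanning the adjacency lists of $N_1$.

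The one delicate step, which I expect to be the real obstacle, is listing the non-edges of $G$ inside $N_1$: the naive ``for each $u$, scan all of $N_1$'' loop costs $\Theta(|N_1|^2)$, which is superlinear precisely in the dense regime where these non-edges are few. Here I exploit the degree bound of the current branch, namely that each $u \in N_1$ has at most two non-neighbours inside $N_1$. I keep $N_1$ in a doubly linked list supporting $\Oh(1)$ deletion and restoration (dancing links). To process $u$, I splice out $u$ and its $d_1(u)$ neighbours in $N_1$; the list then contains exactly the (at most two) non-neighbours of $u$, which I read off in one traversal and record as the $H_v$-edges at $u$; finally I restore the spliced nodes in reverse order of deletion. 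The work for $u$ is $\Oh(d_1(u) + c_u)$, where $c_u \le 2$ is its complement-degree, so the total is $\Oh\bigl(\sum_u d_1(u) + \sum_u c_u\bigr) = \Oh(m + |E(H_v)|) = \Oh(n+m)$, which finishes the explicit construction within the claimed bound.
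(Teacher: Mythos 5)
Your proof is correct and follows essentially the same route as the paper's: mark $N_1$ and $N_2$, compute each vertex's $H_v$-degree from its neighbour counts in $N_1$ and $N_2$ together with $|N_1|$, and split on whether some degree reaches $3$. The only divergence is in the final step, where the paper merely notes that the complement of $G[N_1]$ has size linear in $|N_1|$ and can be built in a straightforward manner, whereas you introduce a dancing-links structure to list the non-edges inside $N_1$. That machinery is sound but unnecessary: in the branch where every $u \in N_1$ has at most two non-neighbours inside $N_1$, the graph $G$ itself contains at least $\binom{|N_1|}{2} - |N_1|$ edges within $N_1$, so $|N_1|^2 = \Oh(n+m)$ and even the naive mark-and-scan over all pairs of $N_1$ already fits in the linear budget.
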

\begin{proof}
First, mark vertices of $N_1$ and $N_2$. Second, for each vertex of $G$ compute its number of neighbours
in $N_1$ and $N_2$. This information, together with $|N_1|$, suffices to compute degrees of vertices
in $H_v$. Hence, we may identify a vertex of degree at least $3$ in $H_v$, if it exists.
For such a vertex $w$, computing $N_{H_v}(w)$ takes time linear in the size of $G$.
If no such vertex $w$ exists, the complement of $G[N_1]$ has size linear in $|N_1|$ and we may
construct $H_v$ in linear time in a straightforward manner.
\end{proof}

\section{Algorithm}\label{sec:alg}

In this section we show our algorithm for \cvd{}, proving Theorem~\ref{thm:main}.
The algorithm is a typical branching algorithm, where at each step
we choose one branching rule and apply it. In each subcase, a number of vertices
is deleted, and the parameter $k$ drops by this number.
If $k$ becomes negative, the current subcase is terminated with a negative answer.
On the other hand, if $k$ is nonnegative and $G$ is a cluster graph,
the vertices deleted in this subcase form a modulator of size at most $k$.

\subsection{Preprocessing}

At each step, we first preprocess simple connected components of $G$.

\begin{lemma}\label{lem:preprocess}
In linear time,
we can for each connected component $C$ of $G$:
    \begin{enumerate}
    \item conclude that $C$ is a clique; or
    \item conclude that $C$ is not a clique, but identify a vertex $w$ such that $C \setminus \{w\}$ is a cluster graph; or
    \item conclude that none of the above holds.
    \end{enumerate}
\end{lemma}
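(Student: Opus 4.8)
The plan is to handle each connected component $C$ essentially independently, spending $O(|C|)$ time (counting both its vertices and its incident edges) on each, so that the total work telescopes to $O(n+m)$. First I would run a single BFS/DFS over $G$ to split it into connected components and, for every component $C$, record its number of vertices $n_C$ and its number of edges $m_C$. A connected graph on $n_C$ vertices is a clique if and only if $m_C = \binom{n_C}{2}$, so this comparison immediately resolves case~(1): whenever $m_C = \binom{n_C}{2}$ I report that $C$ is a clique.

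Suppose now $C$ is connected but $m_C < \binom{n_C}{2}$, so $C$ is not a clique. The crucial observation is that if some single vertex $w$ satisfies that $C \setminus \{w\}$ is a cluster graph, then $w$ must hit \emph{every} induced $P_3$ of $C$; in particular, fixing one $P_3$ confines $w$ to only three candidates. So I would first extract a single $P_3$ from $C$ in $O(|C|)$ time. Since $C$ is connected and not a clique, some vertex $u$ has $\deg(u) < n_C - 1$; I locate such a $u$ from the degree counts, and then by connectivity there is a vertex at distance exactly $2$ from $u$. A BFS from $u$ restricted to $C$ produces a neighbour $z \in N(u)$ together with a vertex $y \in N(z)$ satisfying $uy \notin E$, so that $(u,z,y)$ is the desired $P_3$ (this is precisely the configuration guaranteed by Lemma~\ref{lemma_p3}).

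With the $P_3$ $(a,b,c)$ in hand, every candidate deletion vertex $w$ must hit it, hence $w \in \{a,b,c\}$. For each of these three candidates I would test whether $C \setminus \{w\}$ is a cluster graph: run a BFS/DFS inside $C$ that skips $w$, and for each resulting sub-component verify the clique condition $m = \binom{n}{2}$ using vertex- and edge-counts accumulated during the traversal. Each such test costs $O(|C|)$, and there are only three of them, so the per-component cost remains $O(|C|)$. If some candidate $w$ passes, I report case~(2) together with that $w$; if none of the three passes, then no single vertex deletion turns $C$ into a cluster graph, and I report case~(3).

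The main obstacle is staying within the linear-time budget. Checking the property ``is a cluster graph'' naively over all possible deletion vertices would be far too slow; the key is that fixing a single $P_3$ cuts the search down to a constant number (three) of candidate vertices per component, and that both the $P_3$-extraction and each cluster-graph test are confined to $C$ and run in $O(|C|)$ time. Summing $O(|C|)$ over all components then yields the claimed $O(n+m)$ running time.
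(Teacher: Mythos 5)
Your proposal is correct and follows essentially the same strategy as the paper: locate a single induced $P_3$ in a non-clique component, observe that the deleted vertex must be one of its three vertices, and run three linear-time cluster-graph checks. The only differences are implementation details (you test cliques via the edge count $m_C=\binom{n_C}{2}$ and find the $P_3$ via a degree argument plus BFS to distance $2$, whereas the paper uses an incremental clique-marking DFS), and both variants are valid and linear.
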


\begin{proof}
On each connected component $C$, we perform a depth-first search. At every stage, we
ensure that the set of already marked vertices induces a clique.

When we enter a new vertex, $w$, adjacent to a marked vertex $v$,
we attempt to maintain this invariant.
We check if the number of marked vertices is equal to the number neighbours of
$w$ which are marked; if so then the new vertex $w$ is marked.
Since $w$ is adjacent to every marked vertex, the set of marked vertices remains a clique.
Otherwise, there is a marked vertex $u$ such that $uw \notin E(G)$, and we may discover
it by iterating once again over edges incident to $w$.
In this case, we have discovered a $P_3$ $(u,v,w)$ and $C$ is not a clique.
At least one of $u, v, w$ must be deleted to make $C$ into a cluster graph.
We delete each one of them, and repeat the algorithm (without further recursion)
to check if the remaining graph is a cluster graph.
If one of the three possibilities returns a cluster graph, then (2) holds.
Otherwise, (3) holds.

If we have marked all vertices in a component $C$ while maintaining
the invariant that marked vertices form a clique, then the current component
$C$ is a clique.
\end{proof}

For each connected component $C$ that is a clique, we disregard $C$.
For each connected component $C$ that is not a clique, but $C \setminus \{w\}$
is a cluster graph for some $w$, we may greedily delete $w$ from $G$:
we need to delete at least one vertex from $C$, and $w$ hits all $P_3$s in $C$.
Thus, henceforth we assume that for each connected component $C$ of $G$
and for each $v \in V(C)$, $C \setminus \{v\}$ is not a cluster graph. 
In other words, we assume that we need to delete at least two vertices to solve
each connected component of $G$.

\subsection{Studying $H_v$}

Once preprocessing is no longer possible, we fix an arbitrary vertex $v$ in $G$,
and let $C$ be its connected component.
Our goal is to `resolve' the neighbourhood of $v$: either decide to delete $v$,
or guess the `correct' vertex cover of $H_v$. However, if we implement
this in a straightforward manner, we do not get the time bound promised by Theorem~\ref{thm:main}. To achieve this bound, we carefully study the cases where $H_v$ has
small vertex cover or has special structure, and discover some possible
greedy decisions that can be made.

We would like to make decision depending on the size of the minimum vertex cover
of $H_v$. As $C$ is not a clique, by Lemma \ref{lemma_p3} $H_v$ contains at least one edge,
thus $\minvc(G) \geq 1$. We first note that we can make a distinction on small
vertex covers of $G$ in linear time.
\begin{lemma}\label{lem:findvc}
In linear time, we can determine whether $H_v$
has minimum vertex cover of size 1, of size 2, or of size at least 3.
Moreover, in the first two cases we can find the vertex cover in the same time bound.
\end{lemma}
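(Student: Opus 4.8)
The plan is to reduce everything to a constant number of \emph{vertex-cover tests}, each of which can be performed in linear time directly on $G$, and thereby avoid ever materialising $H_v$ (which may have superlinear size). The key enabling tool is the following reformulation: for a set $X \subseteq N_1 \cup N_2$, the set $X$ is a vertex cover of $H_v$ if and only if the connected component of $v$ in $G \setminus X$ is a clique. One direction is exactly Lemma~\ref{lemma_vc_converse}; the converse follows from Lemma~\ref{lem:Hv-P3} together with Lemma~\ref{lemma_p3}, since any remaining edge of $H_v \setminus X$ would give a $P_3$ through $v$ in $G \setminus X$, contradicting that the component is a clique. Crucially, testing whether the component of $v$ in $G \setminus X$ is a clique can be done in linear time by the depth-first clique check of Lemma~\ref{lem:preprocess}, and whenever the test fails it produces a witness $P_3$ through $v$, hence (via Lemma~\ref{lem:Hv-P3}) a concrete edge of $H_v$ avoiding $X$. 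First I would invoke Lemma~\ref{lem:Hv:linear} to split into two cases depending on whether $H_v$ has a vertex of degree at least $3$.

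In the case where $H_v$ has maximum degree at most $2$, Lemma~\ref{lem:Hv:linear} hands us $H_v$ explicitly, and its size is linear in $|G|$. Here $H_v$ is a disjoint union of paths, cycles, and isolated vertices, for which the minimum vertex cover decomposes as a sum over components, with a path on $t$ vertices contributing $\lfloor t/2 \rfloor$ and a cycle on $t$ vertices contributing $\lceil t/2 \rceil$. I would simply accumulate these contributions in a single linear-time pass, stopping as soon as the running total reaches $3$, and read off whether $\minvc(H_v)$ equals $1$, equals $2$, or is at least $3$, recovering an explicit optimal cover in the first two cases.

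The hard case is when Lemma~\ref{lem:Hv:linear} returns a vertex $w$ with $\deg_{H_v}(w) \geq 3$ together with $N_{H_v}(w)$, since then $H_v$ itself may be too large to examine. The structural observation I would exploit is that, because $w$ has three neighbours on distinct vertices, every vertex cover of $H_v$ of size at most $2$ must contain $w$: a cover avoiding $w$ would need all three of these neighbours, and a size-$1$ cover would have to equal $w$. Thus the procedure is: test whether $\{w\}$ is a vertex cover via the clique check above; if it succeeds then $\minvc(H_v)=1$ with cover $\{w\}$ (there is at least one edge since $\deg_{H_v}(w)\ge 1$). Otherwise the failed test yields a witness edge $ab \in E(H_v)$ with $a,b \neq w$, and any cover of size $2$ must be $\{w,a\}$ or $\{w,b\}$ since the second vertex has to cover $ab$; I would run the clique check on $G \setminus \{w,a\}$ and on $G \setminus \{w,b\}$, declaring $\minvc(H_v)=2$ if either succeeds and $\minvc(H_v)\ge 3$ otherwise. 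Each of the constantly many tests is linear, so the whole computation is linear. The main obstacle, and the point requiring the most care, is precisely this inability to touch $H_v$ directly; the entire argument rests on translating each candidate vertex cover into the equivalent ``component of $v$ is a clique'' condition that can be verified on $G$, and on the degree-$\ge 3$ structure that pins down the $O(1)$ candidate covers to be tested.
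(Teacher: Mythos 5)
Your argument reaches the same conclusion but, in the one case that matters, by a genuinely different route. The shared skeleton: invoke Lemma~\ref{lem:Hv:linear}; when $H_v$ is produced explicitly (maximum degree at most $2$), sum the $\lfloor t/2\rfloor$ and $\lceil t/2\rceil$ contributions of its paths and cycles; when a vertex $w$ of degree at least $3$ is found, note that $w$ lies in every vertex cover of size at most $2$. The divergence is in what happens next. The paper deletes $w$ and restarts the degree computation of Lemma~\ref{lem:Hv:linear} on the residual graph (at most twice) to test for a cover of size $0$ or $1$ there. You instead never touch $H_v$ again: you characterise ``$X$ is a vertex cover of $H_v$'' as ``the component of $v$ in $G \setminus X$ is a clique'' (Lemma~\ref{lemma_vc_converse} for one direction, Lemmas~\ref{lem:Hv-P3} and~\ref{lemma_p3} for the other) and run that linear-time test on the $\Oh(1)$ candidates $\{w\}$, $\{w,a\}$, $\{w,b\}$. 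This is arguably cleaner: each test is a self-contained equivalence checked on $G$, with no bookkeeping about what ``the remaining graph'' means after deleting $w$ from $H_v$. The one point that needs patching is your witness extraction: the clique check of Lemma~\ref{lem:preprocess} returns \emph{some} $P_3$ in the component, not necessarily one through $v$, and you need a $P_3$ through $v$ in order to convert it, via Lemma~\ref{lem:Hv-P3}, into an uncovered edge $ab$ of $H_v$ with $a,b \neq w$. This is easy to repair in linear time --- root the search at $v$ so that the marked clique always contains $v$, and from a witness $(u,v',z)$ with $u,v'$ marked, $v'z \in E$, $uz \notin E$, return the $P_3$ $(z,v,u)$ if $vz \in E$ and $(v,v',z)$ otherwise; alternatively, rerun the constructive argument of Lemma~\ref{lemma_p3} on the component of $v$ --- but as written this step does not follow from the lemma you cite.
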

\begin{proof}
We use Lemma \ref{lem:Hv:linear} on to find, in linear time,
a vertex $w$ with degree at least 3, or generate $H_v$ explicitly.

In the latter case, $H_v$ has vertices of degree at most $2$.
Then, $H_v$ consists of paths and cycles and we can find the size of the
minimum vertex cover in linear time.
We use the fact that paths with $l$ vertices require at least $\lfloor \frac{l}{2} \rfloor$ vertices, and cycles with $l$ vertices require $\lceil \frac{l}{2} \rceil$ vertices in the vertex cover.

If we find a vertex $w$ of degree at least $3$ in $H_v$, then $w$ must be in any vertex cover of size at most $2$. Otherwise, $N(w)$ must be in the vertex cover but $|N(w)| \geq 3$. We proceed to delete $w$ and restart the algorithm of Lemma \ref{lem:Hv:linear}
on the remaining graph to check if it has a vertex cover of size $0$ or $1$.
We perform at most $2$ such restarts.
Finally, if we do not find a vertex cover of size at most $2$,
it must be the case that the minimum vertex cover contains at least $3$ vertices.
\end{proof}

We now make a few important observations about $H_v$ that will enable us
to do some greedy choices in the future.
\begin{lemma}\label{lem:X-S}
Suppose $X$ is a vertex cover of $H_v$. Then there is a solution $S$ such that either $v \notin S$ or $|X \setminus S| \geq 2$. 
\end{lemma}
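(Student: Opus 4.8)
The plan is to split on whether the vertex $v$ can be avoided. If there exists a solution $S$ with $v \notin S$, that $S$ already witnesses the first alternative and we are done. So the substantial case is when \emph{every} solution contains $v$; in this case I aim to produce a solution $S$ with $|X \setminus S| \ge 2$, arguing by contradiction.

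Assume every solution contains $v$, and suppose toward a contradiction that every solution $S$ satisfies $|X \setminus S| \le 1$. I would fix an arbitrary solution $S$ and consider the swap
\[
S' \triangleq (S \setminus \{v\}) \cup (X \setminus S).
\]
The routine bookkeeping comes first. Since $v \notin N_1 \cup N_2 = V(H_v)$, we have $v \notin X$, so $|S'| = |S| - 1 + |X \setminus S| \le |S|$ and $v \notin S'$. Moreover $X \cap S \subseteq S \setminus \{v\}$ while $X \setminus S \subseteq S'$ by construction, hence $X \subseteq S'$; in particular $S' \cap V(H_v) \supseteq X$ is again a vertex cover of $H_v$.

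The heart of the argument is to verify that $S'$ is still a modulator. By Lemma~\ref{lemma_vc_converse} applied to the vertex cover $S' \cap V(H_v)$, the component of $v$ in $G \setminus (S' \cap V(H_v))$ is a clique, and deleting the remaining vertices of $S'$ only removes vertices from this clique, so the component of $v$ in $G \setminus S'$ is a clique. For any other component $D$ of $G \setminus S'$ we have $v \notin D$, and since passing from $S$ to $S'$ only reinstates edges incident to $v$, the edges inside $D$ coincide with those of $G \setminus S$; as $D \subseteq V \setminus S$ and $D$ is connected in $G \setminus S$ as well, $D$ lies inside a single clique of the cluster graph $G \setminus S$ and is therefore itself a clique. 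Thus $G \setminus S'$ is a cluster graph and $S'$ is a modulator avoiding $v$. This component-away-from-$v$ check is the one delicate point, exactly where Lemma~\ref{lemma_vc_converse} and the cluster structure of $G \setminus S$ are indispensable, in the same spirit as Lemma~\ref{prefer_n2}.

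Finally I close the contradiction on cardinality. If $X \subseteq S$ then $|S'| = |S| - 1$, contradicting minimality of the solution $S$; otherwise $|X \setminus S| = 1$ gives $|S'| = |S|$, so $S'$ is itself a solution with $v \notin S'$, contradicting the standing assumption that every solution contains $v$. In either case we reach a contradiction, so the supposition fails and some solution must satisfy $|X \setminus S| \ge 2$, completing the proof.
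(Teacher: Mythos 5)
Your proposal is correct and follows essentially the same route as the paper: the set $S' = (S\setminus\{v\})\cup(X\setminus S)$ coincides with the paper's $T = (S\setminus\{v\})\cup X$, the cardinality bookkeeping is identical, and the modulator check rests on Lemma~\ref{lemma_vc_converse} for the component of $v$ and on the observation that everything outside that component already lived in the cluster graph $G\setminus S$. Your contradiction framing and the component-by-component phrasing (versus the paper's $P_3$-based phrasing) are only cosmetic differences.
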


\begin{proof} Suppose $S$ is a solution such that $v \in S$ and $|X \setminus S| \leq 1$. Consider $T \triangleq (S \setminus \{v\}) \cup X$. Clearly, $|T| \leq |S|$. Since $T$ contains $X$, a vertex cover, by Lemma \ref{lemma_vc_converse}, the connected component of $v$ in $G \setminus T$ is a clique. Thus, there is no $P_3$ containing $v$. Since, any $P_3$ in $G \setminus T$ which does not include $v$ must also be contained in $G \setminus S$, contradicting the fact that $S$ is a modulator, we obtain that $T$ is also a modulator. Hence, $T$ is a solution.
\end{proof}

\begin{corollary}\label{cor:vc1}
If $\minvc(H_v)=1$ then there is a solution $S$ not containing $v$.
\end{corollary}
\begin{proof}
Let $X$ be a minimum vertex cover of $H_v$, and let $S$ be a solution
promised by Lemma \ref{lem:X-S} for the vertex cover $X$. Then
$v \notin S$, as $|X \setminus S| \leq |X| = 1$.
\end{proof}

\begin{lemma} \label{vc2_w}
Suppose that $C \setminus \{v\}$ is not a cluster graph, where $C$ is the connected component containing $v$. 
Suppose further that $X = \{w_1, w_2\}$ is a minimum vertex cover of $H_v$. Then in $G \setminus \{v\}$, either the connected component containing $w_1$ is not a clique, or the connected component containing $w_2$ is not a clique.
\end{lemma}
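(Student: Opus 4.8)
The statement is that if $C\setminus\{v\}$ is not a cluster graph and $X=\{w_1,w_2\}$ is a \emph{minimum} vertex cover of $H_v$ (so $\minvc(H_v)=2$), then removing $v$ alone leaves at least one of $w_1,w_2$ inside a non-clique component of $G\setminus\{v\}$. My plan is to argue by contradiction: suppose that in $G\setminus\{v\}$ both $w_1$ and $w_2$ lie in components that are cliques. I would then show that this forces $\minvc(H_v)\le 1$, contradicting minimality of the size-$2$ cover.

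The key idea is to exploit Lemma \ref{lemma_vc_converse} together with the structure of $H_v$. First I would observe that since $X$ is a vertex cover of $H_v$, by Lemma \ref{lemma_vc_converse} the component of $v$ in $G\setminus X$ is a clique; in particular every $P_3$ through $v$ in $G$ is hit by $\{w_1,w_2\}$, i.e.\ every edge of $H_v$ is incident to $w_1$ or $w_2$ (this is just Lemma \ref{lem:Hv-P3} restated). Now assume both $w_1$ and $w_2$ sit in clique components of $G\setminus\{v\}$. Since $C\setminus\{v\}$ is not a cluster graph, there is a $P_3$, say $(a,b,c)$, in $C\setminus\{v\}$; this $P_3$ does not involve $v$, and because $X$ is a modulator-type cover one of $a,b,c$ must be $w_1$ or $w_2$ (otherwise the $P_3$ would survive in $G\setminus X$ as well, yet it does not lie in the clique component of $v$). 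The plan is to use the assumption that $w_1,w_2$ are in clique components of $G\setminus\{v\}$ to locate such a surviving $P_3$ more carefully and derive that a single vertex already covers all of $H_v$.

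More concretely, I would analyze where the endpoints of $H_v$-edges can lie. The edges of $H_v$ are of two types (Lemma \ref{lem:Hv-P3}): non-edges inside $N_1$, and $G$-edges from $N_1$ to $N_2$. Under the contradiction hypothesis, I want to show that all these edges are actually incident to a single $w_i$. Suppose not, so both $w_1$ and $w_2$ are genuinely needed, meaning each covers an $H_v$-edge that the other does not. I would trace the $P_3$ witnessing that $C\setminus\{v\}$ is not a cluster graph and show that its presence, combined with $w_1,w_2$ lying in clique components of $G\setminus\{v\}$, contradicts the clique structure guaranteed by Lemma \ref{lemma_vc_converse} for the component of $v$ in $G\setminus X$. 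The main obstacle, and the step I expect to require the most care, is the bookkeeping of which vertices lie in $N_1$ versus $N_2$ and how ``being in a clique component of $G\setminus\{v\}$'' constrains the adjacencies of $w_1$ and $w_2$ to the rest of $C$; in particular I must rule out the scenario where the two size-$2$ cover vertices independently cover two disjoint $H_v$-edges while still each sitting in a clique component after deleting $v$. Once that case analysis pins down that one $w_i$ alone covers $H_v$, the contradiction with $\minvc(H_v)=2$ is immediate.
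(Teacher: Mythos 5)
There is a genuine gap. The pivotal step of your plan --- ``one of $a,b,c$ must be $w_1$ or $w_2$, otherwise the $P_3$ would survive in $G\setminus X$'' --- is unjustified and false in general. A vertex cover $X$ of $H_v$ only hits the $P_3$s that contain $v$ (Lemma~\ref{lem:Hv-P3}); Lemma~\ref{lemma_vc_converse} guarantees only that the component \emph{of $v$} in $G\setminus X$ is a clique and says nothing about the other components. A $P_3$ of $C\setminus\{v\}$ whose endpoints lie in $N_2$ or farther from $v$ need not correspond to any edge of $H_v$, so it can be disjoint from $X$ and survive in $G\setminus X$ without any contradiction. Since the remainder of your argument is an admittedly unfinished case analysis aimed at concluding $\minvc(H_v)\le 1$, with no concrete route given, the proof does not go through as sketched. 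Note also that the paper's proof never invokes minimality of $X$: the contradiction it reaches is not with $\minvc(H_v)=2$ but with the existence of the chosen non-clique component.

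The missing idea is to work with a non-clique component $\widehat C$ of $C\setminus\{v\}$ itself rather than with a single witnessing $P_3$. Assuming both $w_1$ and $w_2$ lie in clique components of $G\setminus\{v\}$, one shows: $\widehat C$ meets $N_1$ (as $C$ is connected, every component of $C\setminus\{v\}$ contains a neighbour of $v$); no vertex $w\in\widehat C\cap N_1$ can equal $w_1$ or $w_2$ or be adjacent to them in $G$, since that would place $w_1$ or $w_2$ inside the non-clique $\widehat C$; consequently every $G$-neighbour of such a $w$ in $C\setminus\{v\}$ lies in $N_1$ (a neighbour in $N_2$ would form an $H_v$-edge, forcing that neighbour to be $w_1$ or $w_2$), so by connectivity $\widehat C\subseteq N_1$; finally, since $w_1,w_2\notin\widehat C$ and they cover all edges of $H_v$, the set $\widehat C$ is independent in $H_v$, i.e.\ a clique in $G$ --- contradicting the choice of $\widehat C$. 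Your proposal does not contain this containment-and-independence argument, and the step you flag as ``requiring the most care'' is precisely where it would be needed.
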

\begin{proof} Assume the contrary. Consider a component $\widehat C$ of $C \setminus \{v\}$ which is not a clique. Since $v$ must be adjacent to each connected component of $C \setminus \{v\}$, $\widehat C \cap N_1$ must be non-empty. For any $w \in \widehat C \cap N_1$, we have that $w_1, w_2 \neq w$ and $ww_1, ww_2 \notin E$, since otherwise the result follows. If $uw \in E$ with $u \in N_2$, then, as $\{w_1, w_2\}$ is a vertex cover we must have $u = w_1$ or $u = w_2$, We would then have $w_1$ or $w_2$ contained in a non-clique $\widehat C$, contradicting our assumption. Hence $uw \in E \Rightarrow u \in N_1$. Thus $\widehat C \subseteq N_1$. As $w_1$ and $w_2$ are not contained in $\widehat C$ and they cover all edges in $H_v$, $\widehat C$ must be an independent set in $H_v$. In $G \setminus \{v\}$, therefore, $\widehat C$ must be a clique, a contradiction.
\end{proof}

\begin{lemma} \label{lem:skein-fix}
Let $v \in V$. Suppose that $H_v$ is an $s$-skein. Then there is a solution $S$ such that $v \notin S$.
\end{lemma}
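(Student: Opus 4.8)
The plan is to exploit the very rigid combinatorics of a skein: I would start from the minimum vertex cover of $H_v$ consisting of the seagull middles, invoke Lemma~\ref{lem:X-S} to reduce to a single troublesome case, and then eliminate that case by a local exchange. First I would record the structure forced by the skein. Writing the $s$ seagulls as middles $a_1,\dots,a_s\in N_1$ with endpoints $b_t,c_t\in N_2$, the edge set of $H_v$ is exactly $\{a_tb_t,a_tc_t : 1\le t\le s\}$. From the definition of $H_v$ I would read off that (i) $N_1$ is a clique in $G$, since any non-adjacent pair inside $N_1$ would contribute an edge of $H_v$ with both endpoints in $N_1$, of which a skein has none; and (ii) each endpoint $b_t$ (resp.\ $c_t$) has $a_t$ as its only neighbour in $N_1$, since any further $N_1$--$N_2$ edge of $G$ is an edge of $H_v$. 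Combined with Lemma~\ref{lem:Hv-P3}, this shows that the only $P_3$s of $G$ through $v$ are the triples $\{v,a_t,b_t\}$ and $\{v,a_t,c_t\}$.

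Next I would set $X=\{a_1,\dots,a_s\}$, which is a vertex cover of $H_v$ because each seagull edge is covered by its middle, and apply Lemma~\ref{lem:X-S}. This produces a solution $S$ with either $v\notin S$, in which case we are done immediately, or $|X\setminus S|\ge 2$. So it suffices to rule out the scenario $v\in S$ together with $|X\setminus S|\ge 2$. I write $F=\{t : a_t\notin S\}$, so that $|F|\ge 2$.

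The crux is a forcing argument showing that the endpoints of the free seagulls already lie in $S$. For $t\in F$ I pick any other index $t'\in F$; using that $N_1$ is a clique (so $a_ta_{t'}\in E$) and that $b_t$'s only $N_1$-neighbour is $a_t$ (so $b_ta_{t'}\notin E$), the triple $(b_t,a_t,a_{t'})$ is a genuine $P_3$ of $G$ avoiding $v$, and likewise $(c_t,a_t,a_{t'})$. Since $a_t,a_{t'}\notin S$ while $S$ meets every $P_3$, I conclude $b_t,c_t\in S$ for every $t\in F$. I then verify that $S\setminus\{v\}$ still meets every $P_3$ through $v$: each such $P_3$ is $\{v,a_t,x\}$ with $x\in\{b_t,c_t\}$, and it is hit by $a_t\in S\setminus\{v\}$ when $t\notin F$ and by $x\in\{b_t,c_t\}\subseteq S\setminus\{v\}$ when $t\in F$. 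Since $P_3$s avoiding $v$ are untouched by removing $v$, the set $S\setminus\{v\}$ is a modulator strictly smaller than $S$, contradicting that $S$ is a solution. Hence $v\notin S$.

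I expect the main obstacle to be the forcing step, as the whole argument hinges on producing, for each free seagull, a $P_3$ that avoids $v$ and whose only admissible hitter in $S$ is an endpoint $b_t$ or $c_t$. This is precisely where both hypotheses are essential: $|F|\ge 2$ is needed to supply the second middle $a_{t'}$, and the skein structure is needed to guarantee simultaneously $a_ta_{t'}\in E$ and $b_ta_{t'}\notin E$. Once the endpoints are pinned into $S$, the redundancy of $v$ follows cleanly from the characterisation of the $P_3$s through $v$.
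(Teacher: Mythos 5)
Your proof is correct and follows essentially the same route as the paper's: take the set of seagull middles as the vertex cover $X$, apply Lemma~\ref{lem:X-S}, and use the $P_3$s of the form $(b_t,a_t,a_{t'})$ forced by two uncovered middles to conclude that the endpoints of every free seagull lie in $S$, so that $S\setminus\{v\}$ is a smaller modulator. The only cosmetic differences are that you verify directly that $S\setminus\{v\}$ hits all $P_3$s through $v$ where the paper invokes Lemma~\ref{lemma_vc_converse}, and that your argument absorbs the $s=1$ case (handled separately in the paper via Corollary~\ref{cor:vc1}) since $|X\setminus S|\ge 2$ is then impossible.
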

\begin{proof}
Let $H_v$ consist of seaguls $(x_1,y_1,z_1), (x_2, y_2, z_2), \ldots, (x_s, y_s, z_s)$. That is, the middle vertices $y_i$'s are in $N_1$, while the endpoints $x_i$'s and $z_i$'s are in $N_2$. If $s = 1$, $\{y_1\}$ is a vertex cover of $H_v$ and Corollary~\ref{cor:vc1} yields the result. Henceforth, we assume $s \geq 2$.

As $X$ consider the set $N_1$ with all the vertices
isolated in $H_v$ removed.
Clearly $X$ is a vertex cover of $H_v$,
thus we may use $X$ as in Lemma \ref{lem:X-S} and
obtain a solution $S$. If $v \notin S$ we are done, so let us assume
$|X \setminus S| \geq 2$. 
Take arbitrary $i$ such that $y_i \in X \setminus S$. 
As $|X \setminus S| \geq 2$, we may pick another $j \neq i$, $y_j \in X \setminus S$.
The crucial observation from the definition of $H_v$
is that $(y_j,y_i,x_i)$ and $(y_j,y_i,z_i)$ are $P_3$s in $G$.
As $y_i,y_j \notin S$, we have $x_i,z_i \in S$.
Hence, since the choice of $i$ was arbitrary, we infer that for each $1 \leq i \leq s$
either $y_i \in S$ or $x_i,z_i \in S$, and, consequently, $S$ contains a vertex
cover of $H_v$.
By Lemma~\ref{lemma_vc_converse}, $S \setminus \{v\}$ is also a modulator
in $G$, a contradiction.
\end{proof}

\subsection{Branching steps}\label{sec:br}

We are now ready to present the branching steps of our algorithm.
We assume the preprocessing (Lemma~\ref{lem:preprocess}) is done and
a vertex $v$ is picked.
We first run the algorithm of Lemma~\ref{lem:findvc} to determine
if $H_v$ has a small minimum vertex cover.
Second, we run the algorithm of Lemma~\ref{lem:Hv:linear} to check
if $H_v$ is not an $s$-skein for some $s$.

We consider the following cases.
\begin{enumerate}
\item\label{br:1}
\textbf{$\minvc(H_v) = 1$ or $H_v$ is an $s$-skein for some $s$.}
  Then, by Corollary~\ref{cor:vc1} and Lemma~\ref{lem:skein-fix},
  we know there exists a solution not containing $v$. Hence,
  we run the algorithm of Section~\ref{sec:Hv:br} on $H_v$.

\item\label{br:2}\textbf{$\minvc(H_v) = 2$ and $H_v$ is not a $2$-skein.}
Assume the application of Lemma~\ref{lem:findvc} returned 
a vertex cover $X = \{w_1,w_2\}$ of $H_v$.
By Lemma~\ref{lem:X-S}, 
we may branch into the following two subcases:
in the first we look for solutions containing $v$ and disjoint with $X$,
and in the second, for solutions not containing $v$.

In the first case, we first delete $v$ from the graph and decrease $k$ by one.
Then we check whether the connected component containing $w_1$ or $w_2$ is not a clique;
By Lemma~\ref{vc2_w}, for some $w \in \{w_1,w_2\}$, the connected component
of $G \setminus \{v\}$ containing $w$ is not a clique; finding such $w$ clearly takes
linear time. We invoke the algorithm of Section~\ref{sec:Hv:br} on $H_w$.

In the second case, we invoke the algorithm of Section~\ref{sec:Hv:br} on $H_v$.

\item\label{br:3}
\textbf{$\minvc(H_v) \geq 3$ and $H_v$ is not an $s$-skein for some $s \geq 3$.}
We branch into two cases: we look for a solution containing $v$ or not containing $v$.
In the first branch, we simply delete $v$ and decrease $k$ by one.
In the second branch, we invoke the algorithm of Section~\ref{sec:Hv:br}
on $H_v$.
\end{enumerate}

\subsection{Complexity analysis}

\newcommand{\brT}{\mathbb{T}}

In the previous discussion we have argued that invoking each branching step
takes linear time. As in each branch we decrease the parameter $k$ by at least
one, the depth of the recursion is at most $k$.
In this section we analyse branching vectors occuring in our algorithm.
To finish the proof of Theorem~\ref{thm:main} we need to show that the largest
positive root of the equation $1=\sum_{i=1}^r x^{-a_i}$ among
all possible branching vectors $(a_1,a_2,\ldots,a_r)$ is strictly less than $\finalconst$.

As the number of resulting branching vectors in the analysis is rather large,
we use a Python script for automated analysis (attached in the appendix).
The main reason for a large number of branching vectors is that we need to analyse
branchings on the graph $H_v$ in case when we consider $v$ not to be included
in the vertex cover. Let us now proceed with formal arguments.

In a few places, the algorithm of Section~\ref{sec:Hv:br} is invoked on the graph $H_v$
and we know that $\minvc(H_v) \geq h$ for some integer $h$.
Consider the branching tree $\brT$ of this algorithm.
For a node $x \in V(\brT)$, the {\em{depth}} of $x$ is the number
of vertices of $H_v$ deleted on the path from $x$ to the root.
We mark some nodes of $\brT$. Each node of depth less than $h$ is marked.
Moreover, if a node $x$ is of depth $d < h$ and the branching step
at node $x$ has branching vector $(1,2)$, we infer that
graph $H_v$ at this node is an $s$-skein for some $s \geq h-d$, all descendants
of $x$ in $V(\brT)$ are also nodes with branching steps with vectors $(1,2)$.
In this case, we mark all descendants of $x$ that are within distance (in $\brT$)
less than $h-d$.
Note that in this way we may mark some descendants of $x$ of depth equal
or larger than $h$.

We split the analysis of an application of the algorithm of Section~\ref{sec:Hv:br}
into two phases: the first one contains all branching steps performed on marked nodes,
and the second on the remaining nodes. In the second phase, we simply observe that
each branching step has branching vector not worse than $(1,2)$.
In the first phase, we aim to write a single branching vector summarizing the phase,
so that with its help we can balance the loss from other branches when
$v$ is deleted from the graph.

The main property of the marked nodes in $\brT$ is that their existence is granted
by the assumption $\minvc(H_v) \geq h$. That is, each leaf of $\brT$ has depth
at least $h$, and, if at some node $x$ of depth $d < h$ the graph $H_v$
is an $s$-skein, we infer that $s \geq h-d$ (as the size of minimum vertex cover
    of an $s$-skein is $s$) and the algorithm performs $s$ independent branching steps
with branching vectors $(1,2)$ in this case.
Overall, no leaf of $\brT$ is marked.

To analyse such branchings for $h=2$ and $h=3$ we employ the Python
script, supplied in the appendix. The procedure \texttt{branch\_Hv}
generates all possible branching vectors for the first branch,
assuming the algorithm of Section~\ref{sec:Hv:br} is allowed to
pick branching vectors $(1)$, $(1,3)$, $(2,2)$ or $(1,2)$
(option \texttt{allow\_skein} enables/disables the use of the $(1,2)$ vector
 in the first branch).
Note that all other vectors described in Section~\ref{sec:Hv:br}
may be simulated by applying a number of vectors $(1)$ after
one of the aforementioned branching vectors.

Let us now move to the analysis of the algorithm of Section~\ref{sec:br}.

In Case~\ref{br:1} the algorithm of Section~\ref{sec:Hv:br} performs branchings with vectors not worse
than $(1,2)$.

Consider now Case~\ref{br:2}.
If $v$ is deleted, we apply the algorithm of Section~\ref{sec:Hv:br}
to $H_w$, yielding at least one branching step (as the connected component with $w$
is not a clique). Hence, after this first branching step, we have either one subcase with
parameter drop at least $2$, or two subcases with parameter drops at least $2$
and at least $3$. Clearly, the second case yields worse branching vector.

If $v$ is not deleted, the algorithm of Section~\ref{sec:Hv:br}
is applied to $H_v$. The script invokes the procedure \texttt{branch\_Hv}
on $h = 2$ and \texttt{allow\_skein}$=$\texttt{False} to obtain a list
of possible branching vectors. For each such vector, we append entries $(2,3)$
from the subcase when $v$ is deleted.

Case~\ref{br:3}
is analysed analogously. The script invokes the procedure \texttt{branch\_Hv}
on $h = 3$ and \texttt{allow\_skein}$=$\texttt{False} to obtain a list
of possible branching vectors. For each such vector, we append the entry $(1)$
from the subcase when $v$ is deleted.

We infer that the largest root of the equation $1 = \sum_{i=1}^r x^{-a_i}$
occurs for branching vector $(1,3,3,4,4,5)$ and 
is less than $\finalconst$.
This branching vector corresponds to Case~\ref{br:3}
and the algorithm of Section~\ref{sec:Hv:br},
invoked on $H_v$, first performs a branching step with the vector $(1,3)$ and in
the branch with $1$ deleted vertex, finds $H_v$ to be a $2$-skein and performs
two independent branching steps with vectors $(1,2)$.

This analysis concludes the proof of Theorem~\ref{thm:main}.

\section{Conclusions and open problems}\label{sec:conc}

We have presented a new branching algorithm
for \cvdname{}. We hope our work will trigger a race for faster
FPT algorithms for \cvd{}, as it was in the case of the famous \textsc{Vertex Cover}
problem.

Repeating after H\"{u}ffner et al.~\cite{cvd2k}, we would like
to re-pose here the question for a linear vertex-kernel for \cvd{}.
As \cvd{} is a special case of the \textsc{3-Hitting Set} problem,
it admits an $\Oh(k^2)$-vertex kernel in the unweighted case and
an $\Oh(k^3)$-vertex kernel in the weighted one~\cite{3hs-kernel1,3hs-kernel2}.
However, \textsc{Cluster Editing} is known to admit a much smaller $2k$-vertex kernel,
  so there is a hope for a similar result for \cvd{}.

\bibliographystyle{abbrv}
\bibliography{cluster-vd}

\newpage
\appendix
\section*{Python script automating complexity analysis\footnote{
Also available at~\url{www.mimuw.edu.pl/~malcin/research/cvd}}}

\VerbatimInput{cvd.py}

\end{document}